\newcommand{\argmax}{\text{argmax}}
\newcommand{\bX}{{\bf X}}
\newcommand{\bY}{{\bf Y}}
\newcommand{\hbX}{\hat{\bf X}}
\newcommand{\hbY}{\hat{\bf Y}}
\newcommand{\Cov}{{\bf Cov}}
\newcommand{\subjectto}{\text{ subject to }}
\begin{document}

\title{\Large Feature selection for high-dimensional integrated data}
\author{Charles Zheng\thanks{Texas A \& M Dept. Statistics}, Scott Schwartz$^*$, Robert S. Chapkin\thanks{Texas A \&M Program in Integrative Nutrition \& Complex Diseases, Center for Environmental \& Rural Health}, \\ Raymond J. Carroll$^*$, Ivan Ivanov\thanks{Texas A \& M Dept. Veterinary Physiology and Pharmacology. All emails should be directed to ivanzau at gmail.com}}
\date{}

\maketitle


\begin{abstract} \small\baselineskip=9pt 
Motivated by the problem of identifying correlations between genes or features of two related biological systems,
we propose a model of \emph{feature selection} in which only a subset of the
predictors $X_t$ are dependent on the multidimensional variate $Y$, and the remainder of the predictors
constitute a ``noise set'' $X_u$ independent of $Y$.
Using Monte Carlo simulations, we investigated the relative performance of two methods: thresholding
and singular-value decomposition, in combination with stochastic optimization to determine
``empirical bounds'' on the small-sample accuracy of an asymptotic approximation.
We demonstrate utility of the thresholding and SVD feature selection methods to with respect to a recent 
infant intestinal gene expression and metagenomics dataset.
\end{abstract}

\section{Introduction.}
\subsection{Motivation.}Our study is motivated by the challenge of performing an integrative analysis of a recent infant intestinal host-metabiome dataset \cite{SCHWARTZ}.
The data consists of microarray intensities for $p=585$ genes, $\bX$,
and next-gen sequencing hits for microbial DNA fragments organized into $q=211$ subsystem classes,
collected from stool samples of $n = 6$ newborn babies.
Standard tests reveal conclusive evidence that the gene expression data and microbiome attributes are dependent \cite{SCHWARTZ}.
The next objective  is to qualify the detailed nature of this association; however,
the high dimensionality of the data poses a computational difficulty for modelling.
In order to reduce the dimensionality of the data for initial exploratory modelling,
it is necessary to employ \emph{feature selection} to select a smaller subset of the genes.

\subsection{Background.}
Feature selection in the context of a univariate response has been extensively studied in the
statistics and data mining literature \cite{ESL}.
However, much less has been done on feature selection for a multivariate response vector.
Group lasso \cite{GRLASSO} has been studied as a feature selection method for multivariate linear regression,
but has been generally used for multi-task learning.
Sparse canonical correlation analysis \cite{PARKHOMENKO}\cite{WITTEN2}\cite{WIESEL} has been proposed
specially for high-throughput biological data.
However, sparse CCA does not directly produce a ranking of the features, but rather
returns a list of genes of varying cardinality depending on tuning parameters.
Meanwhile, a factor-analysis-based model \cite{RAI} has been introduced as a bayesian version of
canonical correlation analysis; however, the dimensionality of our data makes bayesian computation impractical.
Therefore, in this paper, we study a simplifed version of sparse CCA which produces a ranking of the features,
which we call the SVD method.

\subsection{Objectives}
The objectives of this current work are to develop tools for investigating
of the performance of two feature selection methods (thresholding and SVD),
and then to apply these tools to inform a integrative 
In section \S\ref{methods} we propose a model for evaluating the performance of the feature selection methods,
develop asymptotic tools for deriving analytical results, and investigate the effectiveness of the asymptotic
approximations using simulation.
In section \S\ref{application} we apply the thresholding and SVD methods to two sets of integrated
microarray-metagenomics data, and use simulation results based on our proposed model to obtain required sampl size
estimates for follow-up experiments.
Further applications of the present body of work are discussed in \S\ref{conclusions}.
 
\section{Methods and Technical Solutions.}\label{methods}
\subsection{Feature Selection Model}
In our application, we hypothesized that associations between the host genes
and bacteria gene expression levels are generally negligible,
except for a small fraction of host genes and microbial gene categories with significant interaction.
Therefore, in our model, we assume that the
host genes with expression levels correlated with the expression levels of the microbial attributes form a small subset $X_t$ of the
host genes $X$, and that the rest of the host genes $X_u$ are independent of the microbial attributes $Y$.
It then follows that, letting $X=(X_t,X_u)$ without loss of generality,
and also putting $Y=(Y_t,Y_u)$ where $Y_u$ is independent of $X$, we have
\begin{equation}
\Sigma_{XY}=\begin{pmatrix}\Sigma_{X_tY_t}&0 \\0 & 0\end{pmatrix}
\end{equation}
where $\Sigma_{X_tY_t} = \Cov(\tilde{X}_t,\tilde{Y_t})$, and where $\tilde{X}_t=\{f_1(X_1),\hdots, f_{p_t}(X_{p_t})\}$,
where $p_t$ is the dimension of $X_t$.

Further assuming that $Cov(X)=I_p$, $Cov(Y)=I_q$, it follows that the covariance matrix of $(X,Y)$ is
\begin{equation}\label{sigma}
\Sigma = \begin{pmatrix}I_{p_t} & 0 & \Sigma_{X_tY_t} & 0 \\
0 & I_{p_u} & 0 & 0\\
 \Sigma_{X_tY_t}^T & 0 & I_{q_t} & 0\\
0 & 0 & 0 & I_{q_u}\end{pmatrix}.
\end{equation}

Now we consider \emph{feature selection algorithms} which return a ranking $\psi: \{1,\hdots,p\} \to \{1,\hdots,p\}$
of the features in $X$.
Here the ranking $\psi$ is formally represented by a bijective map which associates to each ordinal rank $1,\hdots,p$
an index $1,\hdots,p$ of $X$ (thus, we ignore the possibility of ties.)
Thus the feature $X_{\psi(1)}$ is interpreted as the ``most promising feature.''
To formally evaluate ranking methods we use the 1-0 loss for the top-ranked feature:
\begin{equation}
L(\psi) = I(\psi(1) > p_t),
\end{equation}
recalling that we arrange $X$ as $(X_t,X_u)$ so that $X_1,\hdots,X_{p_t}$ are correlated with $Y$.
The rankings $\psi$ can be obtained from real-valued \emph{scores} $s: \{1,\hdots,p\} \to \mathbb{R}$
by letting
\begin{equation}\psi^{-1}(i) = \sum_{j=1}^p I(s(j) > s(i)) + \sum_{j=i}^p I(s(j)=s(i)) \end{equation}
i.e., ranking by scores and breaking ties in favor of the lowest index.

\subsection{Feature selection methods}\label{thressvd}
Perhaps the most straightforward ranking method is based on thresholding the elements of the covariance or correlation matrix $S_{XY}=\Cov(\bX,\bY) \text{ or }\text{\bf Cor}(\bX, \bY)$:
i.e.,defining the score as
\begin{equation}\label{thresscore}s_{thres}(i) = ||S_{X_i Y}||_\infty\end{equation}
where $S_{X_i Y}$ is the $i$th row of $S_{XY}$.

We also consider a ranking method which uses the singular-value decomposition of the cross-correlation or covariance matrix.
Recall that the singular-value decomposition $S_{AB} = UDV^T$ is the unique matrix decomposition
in which $U$ and $V$ are semiorthogonal, and $D$ is diagonal with nonnegative entries in descending order.
The first left singular vector $u_1$ is the first column of $U$, and we define the score based on the absolute values
of the components of $u_1$:
\begin{equation}\label{svdscore}s_{SVD}(i) = |u_{1i}|.\end{equation}
It is known that the left singular vector $u_1$
satisfies the criterion
\begin{equation}\label{criterion}u_1 = \argmax_u Cov(\hbX u, \hbY v) \subjectto ||u||_2 = 1, ||v||_2 =1.\end{equation}
In comparison, the classical technique of canonical correlation analysis \cite{ANDERSON} finds $u, v$ which maximize $Cor(\hbX u, \hbY v)$.
However, the fact that canonical correlation analysis depends on inverting the inter-class sample covariance matrices $S_X, S_Y$ limits its applicability to data with small sample sizes.
Meanwhile, the \emph{sparse canonical correlation analysis} algorithm proposed by Witten\cite{WITTEN2}
proceeds by substituting $I_p$ and $I_q$ for $S_X, S_Y$, but this can be easily seen
to lead to an equivalent criterion to \eqref{criterion}.
However, Witten's algorithm allows for automatic inference of the number of significant features
through the use of an additional $\ell_1$ penalty to \eqref{criterion},
in contrast to our framework, in which we simply rank the features and leave to the user
the decision of how many features to keep.
For instance, in \S\ref{qvalues} we demonstrate the use of permutation-null derived false discovery rates
for determining how many genes to report.

At $n=2$, due to the fact that the sample covariance matrix is rank 1, both the thresholding and SVD methods
necesarily produce the same ranking.
However, for $n > 2$, the rankings can differ.

\subsection{Asymptotics}

As our ultimate goal is to obtain a general understanding of optimal feature selection under our model,
analytical results for the performance of all feature selection methods are indispensable.
Analagous results have been obtained for sparse PCA \cite{SPCA} using asymptotics
for $n \to \infty$ and also for the joint limit $n \to \infty, p \to \infty$.
For the multivariate feature selection problem, a variety of asymptotic limits can be considered:
by increasing the sample size to infinity while also changing the
number of correlated features, number of extraneous features, number of correlated or extraneous variates,
or a number of combination of these.
However, we find it most convenient to consider a limit in which the matrix $\Sigma_{XY}$ is shrunk to zero
as the sample size increases.

While we expect that the sample correlation matrix will be used more often than the sample covariance matrix
in applications, the intractable distribution of the sample correlation matrix \cite{SRIVASTAVA} leads us to consider
only the case in which $S$ is sample covariance matrix.
Then under our model,
it is possible to obtain asymptotic independence of the entries of scaled sample cross-covariance matrix $\sqrt{n}S_{XY}$
by letting $n \to \infty$ while simultaneously allowing the covariance matrix $\Sigma$
to change depending on the sample size.
This result is stated below.
\newline

\begin{theorem}\label{asym}
Let $\Omega$ be a $p \times q$ real matrix.
Define
\begin{equation}
\Sigma(n)=\begin{pmatrix}I_p & (\sqrt{n_0}/\sqrt{n})\Omega \\
(\sqrt{n_0}/\sqrt{n})\Omega^T & I_q
\end{pmatrix}.
\end{equation}
\begin{equation}S(n) \sim Wishart(n,\frac{1}{n}\Sigma(n))\end{equation}
and
let $S_{XY}(n)$ be the submatrix formed by the first $p$ rows and
the last $q$ columns of $S(n)$.
Then as $n\to \infty$,
$vec(\sqrt{n}S_{XY}(n))$ converges in distribution to $N(vec(\sqrt{n_0}\Omega), I_p \otimes I_q)$.\end{theorem}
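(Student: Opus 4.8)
The plan is to realize the Wishart matrix as a sum of $n$ independent rank-one outer products, rescale so the target becomes a normalized sum, and apply a triangular-array central limit theorem. Since $S(n) \sim Wishart(n, \frac{1}{n}\Sigma(n))$, I can write $S(n) = \sum_{i=1}^n Z_i Z_i^T$ with the $Z_i$ i.i.d.\ $N(0, \frac{1}{n}\Sigma(n))$. Partitioning $Z_i = (X_i, Y_i)$ with $X_i \in \mathbb{R}^p$, $Y_i \in \mathbb{R}^q$ gives $S_{XY}(n) = \sum_{i=1}^n X_i Y_i^T$; setting $U_i = \sqrt{n}X_i$ and $V_i = \sqrt{n}Y_i$ yields $(U_i, V_i) \sim N(0, \Sigma(n))$ and $\sqrt{n}S_{XY}(n) = \frac{1}{\sqrt{n}}\sum_{i=1}^n U_i V_i^T$. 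Writing $W_{n,i} = vec(U_i V_i^T)$, the quantity of interest is the normalized sum $\frac{1}{\sqrt{n}}\sum_{i=1}^n W_{n,i}$ of terms that are i.i.d.\ for each fixed $n$ but whose common law drifts with $n$ through the off-diagonal factor $\sqrt{n_0}/\sqrt{n}$.

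Next I would pin down the first two moments. Because $E[U_i V_i^T] = Cov(U_i, V_i) = (\sqrt{n_0}/\sqrt{n})\Omega$, the mean is exactly $\frac{1}{\sqrt{n}} \cdot n \cdot vec((\sqrt{n_0}/\sqrt{n})\Omega) = vec(\sqrt{n_0}\Omega)$ for every $n$, matching the claimed limiting mean with no residual term. For the covariance I would apply Isserlis' (Wick's) theorem to the bilinear forms: for entries indexed by $(j,k)$ and $(l,m)$, $Cov(U_{ij}V_{ik}, U_{il}V_{im}) = \delta_{jl}\delta_{km} + (n_0/n)\Omega_{jm}\Omega_{lk}$, so the covariance of the normalized sum equals $I_p \otimes I_q + (n_0/n)M$ for a fixed matrix $M$ with entries $\Omega_{jm}\Omega_{lk}$, which converges to $I_p \otimes I_q$.

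With the moments settled, I would invoke a triangular-array CLT through the Cram\'er--Wold device. Fix $a \in \mathbb{R}^{pq}$ and set $\xi_{n,i} = n^{-1/2} a^T(W_{n,i} - E[W_{n,i}])$. Lyapunov's condition with exponent $2+\delta$, $\delta = 2$, reduces after cancelling the i.i.d.\ structure to showing $\frac{1}{n}E[(a^T(W_{n,1} - E W_{n,1}))^4] \to 0$; since $a^T W_{n,1}$ is a centered bilinear form in a Gaussian vector whose covariance $\Sigma(n)$ is uniformly bounded (the cross term satisfies $\sqrt{n_0}/\sqrt{n} \le \sqrt{n_0}$), its fourth moment is $O(1)$ uniformly in $n$, so the ratio is $O(1/n) \to 0$. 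This gives $a^T \cdot \frac{1}{\sqrt{n}}\sum_i (W_{n,i} - E W_{n,i}) \to N(0, \|a\|_2^2)$, and Cram\'er--Wold then delivers the stated joint convergence to $N(vec(\sqrt{n_0}\Omega), I_p \otimes I_q)$.

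The main obstacle, and the reason the ordinary i.i.d.\ CLT cannot simply be quoted, is the \emph{triangular-array} structure: the summands are non-Gaussian products of Gaussians whose law changes with $n$. The technical care lies in verifying that this drift is benign -- that the uniform boundedness of $\Sigma(n)$ keeps the Lyapunov (equivalently, Lindeberg) moment condition in force -- while separately confirming that the mean contribution remains exactly fixed rather than vanishing. I expect the Isserlis moment computation and the uniform fourth-moment bound to be the only nontrivial bookkeeping.
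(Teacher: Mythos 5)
Your proposal is correct and follows essentially the same route as the paper: compute the mean and the entrywise second moments of $\sqrt{n}S_{XY}(n)$ (the paper cites the Wishart covariance formula $Cov(s_{ij},s_{kl})=\sigma_{ik}\sigma_{jl}+\sigma_{il}\sigma_{jk}$ from Muirhead, while you derive the same quantities from Isserlis' theorem applied to the rank-one decomposition $S(n)=\sum_{i=1}^n Z_iZ_i^T$), observe that the covariance tends to $I_p\otimes I_q$ while the mean equals $vec(\sqrt{n_0}\Omega)$ exactly, and then invoke a central limit theorem. The only substantive difference is that the paper simply asserts ``the result then follows from applying the multivariate central limit theorem,'' whereas you correctly flag the triangular-array structure (the law of the summands drifts with $n$) and verify the Lyapunov condition via Cram\'er--Wold, which closes a step the paper leaves implicit.
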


\begin{proof}
Let $\sqrt{n}S(n) = (s_{ij})$, $\Sigma(n)=(\sigma_{ij})$, then
note that $Cov(s_{ij},s_{kl}) = \sigma_{ik}\sigma_{jl} + \sigma_{il}\sigma_{jk}$ (see \cite{MUIRHEAD}, p. 90).
Recall that $\sqrt{n}S_{XY}$ consists of the elements $s_{ij}$ where $i \leq p$ and $j > p$.
Thus $Cov(s_{ij},s_{kl})$ can be calculated by:
\begin{description}
\item[Case 1.] $i = k, j=l$.  Then $\sigma_{ik}=\sigma_{jl}=1$,
while $\sigma_{il}=\sigma_{jk}$.
Thus, $Var(s_{ij}) = 1 + \sigma_{il}^2.$
But for $i \leq p$ and $j > p$, $\lim_{n\to \infty} \sigma_{il} \to 0$ so
\begin{equation}\lim_{n\to\infty} Var(s_{ij}) \to 1.\end{equation}
\item[Case 2.] $i \neq k$ or $j\neq l$.
If $i\neq k$,  then $\sigma_{ik}=0$, since it lies off the diagonal of $\Sigma_X= I_p$.
Similarly, if $j \neq l$, then $\sigma_{jl}=0$ since it lies off the diagonal of $\Sigma_Y=I_q$.
Thus, $Cov(s_{ij},s_{kl})$ vanishes asymptotically.
\end{description}
The result then follows from applying the multivariate central limit theorem.\end{proof}

One can easily see that the matrix $\Sigma(n)$ as defined above is positive semidefinite for $n \geq n_0$ (\cite{BHATIA}).
Note that while $\sqrt{n}S_{XY}(n)$ converges to a distribution, the full matrix $\sqrt{n}S(n)$ fails to converge in distribution since its diagonal elements tend to infinity.

Now note that thresholding and SVD methods have the following expressions for the
1-0 loss when applied to matrix $T=S_{XY}$:
\begin{equation}\label{lthres}
L_{thres}(T) = I(\argmax_i ||(T^T)_i||_\infty\leq p_t)\end{equation}
and
\begin{equation}
L_{SVD}(T) = I(\argmax_i|(\argmax_u ||u^T T||)_i|) \leq p_t\end{equation}

Therefore we can compute the asymptotic approximation for the risk of the thresholding method as follows:
\newline

\begin{proposition}
Let $n_0$, $\Omega$, $\tilde{S}_{XY}$ be defined as in Theorem \ref{asym} and let $L_{thres}$ be defined as in \eqref{L_{thres}}.
Then,
\begin{align}
\lim_{n \to \infty}&\mathbb{E}(L_{thres}(\tilde{S}_{XY})) \\
= &\int_{x=0}^\infty (p_- q) F_{\chi^2_1}(x)^{p_- q - 1} f_{\chi^2_1} \bigg( 1- \prod_{i = 1}^{p_+} \prod_{j=1}^q F_{\chi^2_1,\omega_{ij}^2}(x)\bigg) dx
\end{align}
where $F_{\chi^2_1, d}(x)$ is the cdf of the noncentral chi-squared distribution with 1 degree of freedom and
noncentrality parameter $d \geq 0$, $F_{\chi^2_1}(x) = F_{\chi^2_1,0}(x)$ and $f_{\chi^2_1} = -\frac{d}{dx} F_{\chi^2, 1}(x)$.
\end{proposition}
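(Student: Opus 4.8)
The plan is to transport the whole computation onto the limiting distribution furnished by Theorem~\ref{asym} and then evaluate an explicit probability. By that theorem the entries of $\tilde{S}_{XY}=\sqrt{n}S_{XY}(n)$ become, in the limit, mutually independent Gaussians, the $(i,j)$ entry converging to $N(\omega_{ij},1)$ with $\omega_{ij}=\sqrt{n_0}\Omega_{ij}$, and every noise-row entry ($i>p_t$) having mean zero. Consequently the squared entries $\tilde{S}_{XY,ij}^2$ converge jointly to independent noncentral chi-squared variables with one degree of freedom and noncentrality $\omega_{ij}^2$, the noise entries being central.

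First I would rewrite the loss in terms of these squares. Since the thresholding score is $s_{thres}(i)=\max_j|\tilde{S}_{XY,ij}|$ and $|\cdot|$ is monotone in the square, the top-ranked feature is exactly the row containing the global maximum of $\tilde{S}_{XY,ij}^2$. Hence $L_{thres}=I(\argmax_i s_{thres}(i)\leq p_t)$ is the event that this global maximum falls in a signal row, i.e. $\{M_+>M_-\}$, where $M_+=\max_{i\leq p_t,\,j}\tilde{S}_{XY,ij}^2$ and $M_-=\max_{i>p_t,\,j}\tilde{S}_{XY,ij}^2$.

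Next I would compute $P(M_+>M_-)$ under the limiting law by conditioning on $M_-$. The $p_-q$ noise squares are i.i.d.\ central $\chi^2_1$, so $M_-$ has distribution function $F_{\chi^2_1}(x)^{p_-q}$ and density $p_-q\,F_{\chi^2_1}(x)^{p_-q-1}f_{\chi^2_1}(x)$; independently, the $p_+q$ signal squares give $P(M_+>x)=1-\prod_{i=1}^{p_+}\prod_{j=1}^{q}F_{\chi^2_1,\omega_{ij}^2}(x)$. Multiplying and integrating over the density of $M_-$ reproduces the stated integral exactly. This step is routine once the event has been identified.

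The main obstacle is justifying that $\lim_n \mathbb{E}(L_{thres})$ equals the expectation of the indicator under the limiting law, since $L_{thres}$ is a discontinuous functional (an indicator of an argmax event) and weak convergence need not commute with expectations of discontinuous functions. I would handle this by the Portmanteau/continuous-mapping route: the map $T\mapsto L_{thres}(T)$ is continuous at every $T$ whose row-maxima are distinct and whose maximal entries are unique, so its discontinuity set is contained in the union over $i\neq k$ of the tie events $\{\max_j T_{ij}^2=\max_j T_{kj}^2\}$ together with the within-row ties. Under the absolutely continuous limiting Gaussian law these tie events have probability zero, so $L_{thres}$ is almost surely continuous with respect to the limit; being bounded, its expectation therefore converges. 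The same zero-probability-of-ties fact also legitimizes treating the $\argmax$ as well defined.
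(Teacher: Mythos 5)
Your proposal is correct and follows essentially the same route as the paper: exploit the asymptotic independence of the Gaussian entries from Theorem~\ref{asym}, factor the max-norm events into products of noncentral $\chi^2_1$ probabilities, and reduce the loss to the comparison of the signal-block and noise-block maxima, integrated against the density of the noise maximum. You additionally supply the interchange of limit and expectation via almost-sure continuity of the argmax indicator under the (atomless) limiting law, a step the paper's two-line proof leaves implicit, and your identification of the event as $\{M_+>M_-\}$ is actually more consistent with the stated integral and with the definition of $L_{thres}$ in \eqref{lthres} than the inequality direction written in the paper's own proof.
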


\begin{proof}
Observe that when $Z \sim N(\mu, I_m)$
\begin{equation}Pr[||Z||_\infty < x] = \prod_{i=1}^m Pr[Z_i^2 < x^2],\end{equation}
meaning that 
\begin{equation}\mathbb{E}(L_{thres}(\tilde{S}_{XY}))=Pr[||vec(\tilde{S}_{X_t Y})||_\infty < ||vec(\tilde{S}_{X_u Y}||_\infty]\end{equation}
 can be calculated in terms of
noncentral chi-squared distributions.\end{proof}

In a similar way, bounds on $\mathbb{E}(L_{SVD}(\tilde{S}_{XY})$ can be obtained by comparing
the singular values of $S_{X_t Y}$ and $S_{X_u Y}$.
We also claim without proof that that such asymptotic approximations uniformly converge to the true risk function
for fixed $p$, $q$, and $0 < p_t < p$, as $n_0$ tends to infinity,
for any feature selection methods which follow the two conditions:
\begin{itemize}
\item \emph{Monotonicity with respect to sample-size}: \begin{equation}
\mathbb{E}(L(S_{XY}(n))) < \mathbb{E}(L(S_{XY}(n+1)))
\end{equation}
\item \emph{Monotonicity with respect to signal strength}: 
For all $\Omega \in \mathbb{R}_{p \times q}$ with $||\Omega|| \leq 1$, defining $\Sigma(\lambda) = \begin{pmatrix}I_p & \lambda\Omega \\ \lambda \Omega^T & I_q\end{pmatrix}$ for positive constant $\lambda <1$, $S(n, \lambda) \sim Wishart(n, \frac{1}{n}\Sigma(\lambda))$, and $S_{XY}(n,\lambda)$ as the submatrix comrpised of the first $p$ rows and $q$ columns of $S(\lambda)$,
 \begin{equation}
\mathbb{E}(L(S_{XY}(n, 1))) > \mathbb{E}(L(S_{XY}(n, \mu)))
\end{equation}
\end{itemize}
Additionally we claim that the thresholding method and the SVD method both satisfy these monotonicity conditions.
We postpone the technical justification of these claims for a forthcoming theoretical paper.

To determine the small-sample validity of the asymptotic approximation obtained above, we use
stochastic optimization applied to Monte Carlo simulations, as we discuss in the subsequent subsection.

\subsection{Computational Methods}

For simulation purposes we assume that $X, Y$ have a multivariate joint normal distribution with the covariance matrix \eqref{sigma}.
To reduce the size of the parameter space, we set $p_t=q_t$ and require that $\Sigma_{X_tY_t}$ be a
random matrix parameterized by a single parameter, $p_t$.
Specifically, we let
\begin{equation}\Sigma_{X_tY_t} = G_1 D G_2^T\end{equation}
where $G_1, G_2$ are independent random $p_t \times p_t$ orthogonal matrices and $D$
be a diagonal matrix with diagonal entries $d_1,\hdots,d_{p_t}$.

The resulting model consists of four parameters:
\begin{itemize}
\item $n$, the sample size
\item $p_t$, the number of correlated features and response variates
\item $p_u$, the number of extraneous features which are uncorrelated with the response
\item $q_u$, the number of components in the response vector uncorrelated with the explanatory variate
\end{itemize}

Under this model define the following functions of the parameters $n,p_t,p_u,q_u$:
\begin{eqnarray}
P_{thres}&=\mathbb{E}[1-L_{thres}(S_{XY})], \\
\tilde{P}_{thres}&=\mathbb{E}[1-L_{thres}(S_{XY})],\\ 
P_{SVD} &= \mathbb{E}[1-L_{SVD}(S_{XY})],\\
\tilde{P}_{SVD}&=\mathbb{E}[1-L_{SVD}(\tilde{S}_{XY})],\end{eqnarray} i.e. $P$ is the probability that the top-ranked feature is correlated to $Y$.

Using monte carlo simulations we can approximate $P_{thres}, P_{SVD}, \tilde{P}_{thres}, \tilde{P}_{SVD}$ by the following procedure
\begin{enumerate}
\item For monte carlo trials $i=1,\hdots,mc_{res}$ generate independent random orthogonal matrices \cite{STEWART} $G_1^{(i)},G_2^{(i)}$ and independent random diagonal matrices with uniform [0,1] entries $D^{(i)}$.
\item Form population cross-covariance matrices $\Sigma_{XY}^{(i)}$ by
$\Sigma_{XY}^{(i)} = G_1^{(i)} D^{(i)} G_2^{(i)T}$, and population covariance matrices $\Sigma^{(i)} = \begin{pmatrix}I_p & \Sigma_{XY}^{(i)}\\ \Sigma_{XY}^{(i)T} & I_q\end{pmatrix}$
\item Form sample cross-covariance matrices $S_{XY}^{(i)}$ by extracting the
first $p$ rows and last $q$ columns of a $Wishart(n-1,\frac{1}{n-1}\Sigma^{(i)})$ matrix.
\item Form asymptotic sample cross-covariance matrices $\tilde{S}_{XY}^{(i)}$ by
\begin{equation}
vec(\tilde{S}_{XY}^{(i)}) \sim N(vec(\Sigma_{XY}^{(i)}), \frac{1}{n-1}I_p \otimes I_q).\end{equation}
\item For each $S_{XY}^{(i)}, \tilde{S}_{XY}^{(i)}$ appy thresholding and SVD methods to obtain rankings
$\psi_{thres}^{(i)}, \psi_{SVD}^{(i)}, \tilde{\psi}_{thres}^{(i)}, \tilde{\psi}_{SVD}^{(i)}$ 
\item Compute approximate values of $P_{thres}, P_{SVD}, \tilde{P}_{thres}, \tilde{P}_{SVD}$
by 
\begin{eqnarray}
P_{thres} &= \frac{\sum_{i=1}^{mc_{res}} I(\psi_{thres}^{(i)}(1) \leq p_t)}{mc_{res}}\\
P_{SVD} &= \frac{\sum_{i=1}^{mc_{res}} I(\psi_{SVD}^{(i)}(1) \leq p_t)}{mc_{res}}\\
\tilde{P}_{thres} &= \frac{\sum_{i=1}^{mc_{res}} I(\tilde{\psi}_{thres}^{(i)}(1) \leq p_t)}{mc_{res}}\\
\tilde{P}_{SVD} &= \frac{\sum_{i=1}^{mc_{res}} I(\tilde{\psi}_{SVD}^{(i)}(1) \leq p_t)}{mc_{res}}
\end{eqnarray}
\end{enumerate}

In this paper we use stochastic search techniques to calculate approximate bounds on
\begin{equation}
\max_{p_t,p_u,q_u} |P_{thres} - \tilde{P}_{thres}|
\end{equation}
\begin{equation}
\max_{p_t,p_u,q_u} |P_{SVD} - \tilde{P}_{SVD}|
\end{equation}
for fixed $n$.

The problem of optimizing $P_{thres/SVD}-\tilde{P}_{thres/SVD}$ over a three-dimensional discrete parameter space $p_t, p_u, q_u$
can be handled via two different approaches\cite{SPALL}:
\begin{itemize}
\item Optimizing over a fixed grid of points $(p_t,p_u,q_u)$ using sequential testing methods.
\item Using a stochastic analogue of gradient descent, \emph{stochastic approximation}
\end{itemize}
However, in order to take advantage of our massively parallel computing setup, we develop a population-based optimization technique which
combines aspects of both approaches.
The proposed algorithm is outlined below:\newline

\noindent\emph{Algorithm 1}
\begin{enumerate}
\item Given a random variable $X|\theta$ over a parameter space $\theta$, we wish to find \[\argmax_\theta \mathbb{E}[X|\theta]\]
\item Starting with a grid of parameter values $\theta_1,\hdots,\theta_{k_0}$, compute empirical means
$\bar{X}|\theta_1,\hdots, \bar{X}|\theta_{k_0}$ using $mc_{res}$ repeated measurements at each parameter value
\item At the $t$th step, let $\theta^{t}_1, \hdots, \theta^{t}_m$ be the $m$ parameter values with the largest empirical means $\bar{X}$ among $\theta_1,\hdots,\theta_{k_{t-1}}$
\item Update $\bar{X}|\theta^{t}_1,\hdots,\bar{X}|\theta^{t}_m$ with $mc_{res}$ additional measurements of $X$ at each parameter value
\item Generate $\theta_{k_{t-1}+1}, \hdots, \theta_{k_t}$ by randomly perturbing $\theta^t_1,\hdots,\theta^t_m$, and 
compute empirical means $\bar{X}|\theta_{k_{t-1}+1},\hdots, \bar{X}|\theta_{k_t}$ using $mc_{res}$ repeated measurements at each parameter value.
\item Repeat until step $t_{final}$.
\end{enumerate}

Our optimization results are discussed in \S\ref{optres}.

\subsection{Computational Results}\label{optres}
Table 1 provides the results obtained for $n=2$ and $n=6$.  Note that
standard errors for all probabilities are less than $0.002$.
In both cases we run Algorithm 1 for $t_{final}=5$ steps, using $mc_{res}=5000$, $k_0 = 500$,
with $\theta_1,\hdots,\theta_{k_0}$ being grid points over $p_t = \{2,3,4,5,6\}, p_u =\{1,6,\hdots,41,46\}, q_u=\{1,6,\hdots,41,46\}$, $m=10$, and $k_t-k_{t-1}=100$, with $\theta_{k_{t-1}+1}, \hdots, \theta_{k_t}$
being generated by creating 10 perturbed copies of $\theta^{t-1}_1,\hdots,\theta^{t-1}_{10}$
with additive perturbations $(\delta_1,\delta_2,\delta_3)$ where $\delta_1$ is uniformly distributed
over $\{-1,0,1\}$ and $\delta_2, \delta_3$ independently and uniformly distributed over $\{-3,-2,\hdots,2,3\}$.

\noindent\begin{table}\begin{tabular}{ c|ccccc }
n = 2 & & & & & \\
  $\max{P_{thres}-\tilde{P}_{thres}}$ & $p_t$ & $p$ & $q$ & $P_{thres}$ & $\tilde{P}_{thres}$\\
\hline
  0.00 & 5 & 40 & 40 & 0.12 & 0.12\\
\\[-3ex]
  $\max{\tilde{P}_{thres}-P_{thres}}$ & $p_t$ & $p$ & $q$ & $P_{thres}$ & $\tilde{P}_{thres}$\\
\hline
0.06 & 2 & 7 & 2 & 0.29 & 0.34\\
\\[-3ex]
  $\max{P_{SVD}-\tilde{P}_{SVD}}$ & $p_t$ & $p$ & $q$ & $P_{SVD}$ & $\tilde{P}_{SVD}$\\
\hline
  0.00 & 2 & 53 & 55 & 0.03 & 0.03\\
\\[-3ex]
  $\max{\tilde{P}_{SVD}-P_{SVD}}$ & $p_t$ & $p$ & $q$ & $P_{SVD}$ & $\tilde{P}_{SVD}$\\
\hline
0.06 & 2 & 7 & 2 & 0.29 & 0.34\\
\\[-3ex]
n = 6 & & & & & \\
  $\max{P_{thres}-\tilde{P}_{thres}}$ & $p_t$ & $p$ & $q$ & $P_{thres}$ & $\tilde{P}_{thres}$\\
\hline
  0.01 & 2 & 53 & 53 & 0.05 & 0.04\\
\\[-3ex]
  $\max{\tilde{P}_{thres}-P_{thres}}$ & $p_t$ & $p$ & $q$ & $P_{thres}$ & $\tilde{P}_{thres}$\\
\hline
0.05 & 2 & 5 & 6 & 0.52 & 0.57\\
\\[-3ex]
  $\max{P_{SVD}-\tilde{P}_{SVD}}$ & $p_t$ & $p$ & $q$ & $P_{SVD}$ & $\tilde{P}_{SVD}$\\
\hline
  0.01 & 5 & 42 & 13 & 0.21 & 0.20\\
\\[-3ex]
  $\max{\tilde{P}_{SVD}-P_{SVD}}$ & $p_t$ & $p$ & $q$ & $P_{SVD}$ & $\tilde{P}_{SVD}$\\
\hline
0.05 & 2 & 5 & 6 & 0.52 & 0.57\\
\end{tabular} 
\caption{Stochastic optimization results for $n=2$ and $n=6$}
\end{table}

Note from Table 1 that the maximum discrepancy between the asymptotic result and the true small-sample
value decreases from $n=2$ to $n=6$ as might be expected.
However, these results are far from exhaustive, and it remains to perform the optimization for larger values of $n$
to confirm the apparent small-sample accuracy of the asymptotic approximation.

\section{Application.}\label{application}

\subsection{Summary}

In this section we apply the thresholding method and SVD method to select genes from a recent
microarray-metagenomics dataset (\S\ref{dataset}).  For each method we obtain a global permutation null distribution to determine false discovery rates for the corresponding ranked list of genes (\S\ref{qvalues}).

We use these q-values as a basis to determine which of the resulting rankings to use and to select how many genes to report from that ranked lists (\S\ref{results}).
The strongest results are obtained from applying the SVD method to the formula-fed data,
which accords with our simulation results indicating the relative strength of SVD for low sample sizes
and with previous observations of the relative homogeneity of the formula-fed data.
Based on a q-value cutoff of 0.15 we end up reporting ten genes:
MMD, PPP3CA, ALOX5, PAFAH2, C1QTNF6, MSRB3, VTN, ACVR1B, WASL, and MET.
To investigate the validity of the resulting q-values, we check our results against
rankings of genes from the thresholding and SVD methods combined with
an alternative permutation null.
We observe that although higher q-values result from the local null, the rankings of genes
resulting from SVD applied to the formula-fed data with the global null and the local nulls have high overlap.
In particular, PPP3CA and ALOX5 are top-ranked genes in both procedures.
We then apply the SVD procedure to identify metabiome attributes associated
with the ten selected genes, but none of the metabiome attributes are found to be siginificantly associated with the
selected genes.

We discuss possible biological interpretations of these findings in \S\ref{discussion}.

For the purpose of determining the sample size needed for a follow-up study, in \S\ref{simulation} we find the simulated performance of
the thresholding and SVD method as the sample and $p_t$, the true number of correlated genes and metabiome features, are varied.
From these results it is clear that while the SVD method dominates the thresholding methods at low sample sizes,
the thresholding method rapidly improves in performance as sample size increases and as $p_t$, the number
of correlated genes, increases.
Yet even under the most favorable conditions it appears that a sample size of around 100 is required
for reliable feature selection under our model, for $p=600$ and $q=200$.

\subsection{Dataset}\label{dataset}
The data originates from an experiment to study the effect of breast-feeding versus formula-feeding on infant health.
Stool samples were collected from six breast-fed babies and six formula-fed babies, and gene expression levels
were obtained via microarray intensities of host mRNA fragments isolated from the stool sample, while bacterial microbiome subsystem profiles were obtained by aggregating the fragments detected by metagenomic pyrosequence according to the three-level MG-RAST annotation \cite{AZIZ}.

Previous analyses characterised differences between the gene expression levels of the
two treatment groups \cite{CHAPKIN} and multivariate relationships between the host expression levels
and microbiome attributes which were potentially induced by the differences between treament groups \cite{SCHWARTZ}.
The current study is motivated by the goal of identifying mutalistic relationships between the host and the intestinal microbiome on the
basis of the microarray-metagenomics expression data for each treatment group seperately.

\subsection{Preprocessing.}
As per the suggestions in \cite{SCHWARTZ}, we focus on the immunology-related genes,
producing a data matrix of 6 observations by 585 genes, $\bX_{raw}$.
We select the microbial attributes with read counts higher than 300,
resulting in a data matrix of 6 observations by 211
microbial feature hit counts for each treatment group, $\bY_{raw}$.
We apply loess normalization to the log-transforms of the raw intensities in $\bX_{raw}$ \cite{SCHWARTZ},
standardize the rows and columns of $\bX_{raw}$ and $\bY_{raw}$ to have mean 0 and variance 1 as described in \cite{EFRON}
to arrive at the processed matrices $\bX$.
The hit counts in $\bY_{raw}$ are converted to proportions by individuals, then log-transformed, then row and column standardized to produce $\bY$.

\subsection{Procedure}\label{qvalues}
We form $S_{XY}=\Cov(\bX,\bY)$ and apply the thresholding and SVD methods to rank
the genes in $X$.

We also obtain false discovery rates ($q$-values) for each method by using a global row-wise permutation null
ditribution and prior false positive rate $\pi_0=1$ \cite{EFRON}.

For each method we compute a separate p-value $p_{thres},p_{SVD}$ for each gene via 
a global permutation null distribution for the scores $s_{thres}, s_{SVD}$ of the individual genes by the following:
\begin{enumerate}
\item Let $s_{thres}(j)$ be the score of the $j$th gene according to thresholding, as from \eqref{thresscore}, and $s_{SVD}(j)$ be the score of the $j$th gene according to SVD, as from \eqref{svdscore}.
\item For repetitions $i=1,\hdots,mc_{res}$ with $mc_{res} = 1000$, form permuted data matrix $\bY^{(i)}$
by independently permuting each row of $\bY$.  Then form cross-correlation matrices $S_{XY}^{i}$ from $\text{\bf Cor}(\bX,\bY)$.
\item Compute scores $s_{thres}^{(i)}$ and $s_{SVD}^{(i)}$ from $S_{XY}^(i)$.
\item Compute the $p$-values of the $j$th gene according to thresholding and SVD as:
\begin{equation}
p_{thres}(j) = \sum_{i=1}^{mc_{res}}\sum_{k=1}^p \frac{I(s_{thres}(j) \leq s_{thres}^{(i)}(k))}{mc_{res}p}
\end{equation}
\begin{equation}
p_{SVD}(j) = \sum_{i=1}^{mc_{res}}\sum_{k=1}^p \frac{I(s_{SVD}(j) \leq s_{thres}^{(i)}(k))}{mc_{res}p}
\end{equation}
\end{enumerate}

Next, let $\tau_{thres}(j)$ be the ranking of the $j$th gene in ascending order of the $p_{thres}$-values,
and let $\tau_{SVD}(j)$ be the ranking of the $j$th gene in ascending order pf the $p_{SVD}$-values, with ties broken in favor of the lowest index.
Note that $\tau = \psi^{-1}$ when a global null distribution is used.
Compute the false discovery rates $q_{thres}(j)$ and $q_{svd}(j)$ as
\begin{equation}
q_{thres}(j) = p (p_{thres}(j))/\tau_{thres}(j)
\end{equation}
\begin{equation}
q_{SVD}(j) = mp (p_{SVD}(j))/\tau_{SVD}(j)
\end{equation}
where $m$ is a correction factor for dependence \cite{BENJAMINI},
\begin{equation}
m = \sum_{j=1}^p \frac{1}{j}
\end{equation}
which evaluates to $6.95$ for $p=585$.

For comparative purposes we compute alternate p-values $\dot{p}_{thres}, \dot{p}_{SVD}$ according to a local permutation null distribution.  Note that resulting ascending ranking of $\dot{p}_{thres}, \dot{p}_{SVD}$ may differ from $\psi_{thres}, \psi_{SVD}$ respectively, since each gene has a unique null distribution.
The procedure is as follows:
\begin{enumerate}
\item For repetitions $i=1,\hdots,mc_{res}$ with $mc_{res} = 1000$, form permuted data matrix $\bY^{(i)}$
by permuting the row labels of $\bY$.  Then form cross-correlation matrices $S_{XY}^{i}$ from $\text{\bf Cor}(\bX,\bY)$.
\item Compute scores $s_{thres}^{(i)}$ and $s_{SVD}^{(i)}$ from $S_{XY}^(i)$.
\item Compute the $p$-values of the $j$th gene according to thresholding and SVD as:
\begin{equation}
\dot{p}_{thres}(j) = \sum_{i=1}^{mc_{res}}\frac{I(s_{thres}(j) \leq s_{thres}^{(i)}(j))}{mc_{res}}
\end{equation}
\begin{equation}
\dot{p}_{SVD}(j) = \sum_{i=1}^{mc_{res}}\frac{I(s_{SVD}(j) \leq s_{thres}^{(j)}(k))}{mc_{res}}
\end{equation}
\end{enumerate}
From these p-values $\dot{p}$ we obtain alternate rankings $\dot{\tau}^{-1}$ for thresholding and SVD.
We discuss the rankings $\psi_{thres}$ and $\psi_{SVD},\ \dot{\tau}^{-1}_{SVD}$ for the formula-fed data
in \S\ref{results}.

\subsection{Results}\label{results}

Table 2 provides the top three genes identified by
thresholding and SVD applied to the breast-fed data along with
q-values obtained from the global permutation null (\S\ref{qvalues}),
and Table 3 provides the analagous results for the formula-fed data.
Note that q-values for the SVD method can exceed 1 due to the correction factor for dependence.

Note that while thresholding has comparable q-values for the breast-fed and formula-fed data,
the SVD method produces extremely weak q-values for the breast-fed data but
extremely strong q-values for the formula-fed data.
This discrepancy in performance may be due to the increased variability in the
gene expression levels for the breast-fed data, as observed in \cite{CHAPKIN}
through examination of the raw intensities of ``housekeeping genes'' for the formula-fed and
breast-fed data.
Furthermore, it is already clear from Tables 2 and 3 that SVD applied to the formula-fed data
has the strongest results overall.  Table 4 provides the entire list of genes
produced by the SVD method applied to the formula-fed data with a q-value less than 0.15.
It is also worth noting that PPP3CA and PAFAH2 are common to both the top 10 genes
for the thresholding and SVD method; what is not shown is that there are no other
commonalities to the top 10 genes list.

Table 4 provides the alternate p-values $\dot{p}_{SVD}$ computed for the SVD method applied
to the formula-fed data using the local permutation null described in \S\ref{qvalues}.
The rankings $\psi_{SVD}$ and $\dot{\tau}^{-1}_{SVD}$ have high overlap in the sense
that 7 of the top 10 genes in $\psi$ are also among the top 10 genes in $\dot{\tau}^{-1}_{SVD}$:
namely: MMD, PPP3CA, ALOX5,  PAFAH2, C1QTNF6, VTN, and ACVR1B.
In particular, PPP3CA nad ALOX5 are in the top 3 genes in both permutation nulls.

From these results we judge it appropriate to select the top ten genes resulting from SVD applied
to the formula-fed data for further analysis.

In order to identify the metabiome attributes most closely associated with these ten genes,
we let $\bX$ be the metabiome data and $\bY$ be the intensities for the ten selected genes,
and apply SVD-based feature selection.
The results are listed in Table 5.
The first column of Table 5 provides the name of first SEED hierachy of the microbial attribute, which is the broadest categorization
in the MG-RAST SEED annotation scheme.
The second column is the name of the MG-RAST subsystem annotation, the finest level of the
hierarchical SEED annotation scheme and the level chosen for data aggregation.
While the q-values are very weak, it is worth noting that two of the top five attributes
belong to the virulence category, since only 11 of the 211 microbial attributes belong to the virulence category.
While two of the top five attributes also belong to the carbohydrates category, this is less interesting since
a total of 42 out of 211 of the microbial attributes belong to the carbohydrates category.

\noindent\begin{table}
\caption{Breast-fed data: Global null results}
\begin{tabular}{ c||cc||cc }
\hline
\# & name & $q_{thres}$ & name & $q_{SVD}$\\\hline
1 & THBS2 & 0.38 & GBP1 & 3.78\\
2 & FYN & 0.28 & TNFAIP8L1 & 2.27\\
3 & CRNN & 0.39 & TYROBP & 1.86\\\hline
\end{tabular} 
\end{table}

\noindent\begin{table}
\caption{Formula-fed data: Global null results}
\begin{tabular}{ c||cc||cc }
\hline
\# & name & $q_{thres}$ & name & $q_{SVD}$\\\hline
1 & PPARA & 0.15 & MMD1 & 0.00\\
2 & PPP3CA & 0.86 & PPP3CA & 0.00\\
3 & SDC4 & 0.39 & ALOX5 & 0.00\\
4 & PAFAH2 & 0.70 & PAFAH2 & 0.00\\
\hline
\end{tabular} 
\end{table}

\noindent\begin{table}
\caption{Fomula-fed data: SVD results}
\begin{tabular}{ c|cc|cc}
\hline
name & $\tau$& $q$ & $\dot{\tau}$ & $\dot{p}$ \\\hline
MMD &1& 0.00 & 5 & 0.002 \\
PPP3CA &2& 0.00 & 1 & 0.000 \\
ALOX5 &3& 0.00 & 2 & 0.000\\
PAFAH2 &4& 0.00 & 6 & 0.004 \\
C1QTNF6 &5& 0.00 & 10 & 0.011 \\
MSRB3 &6& 0.00 & 11 & 0.011\\
VTN &7& 0.00 & 3 & 0.002\\
ACVR1B &8& 0.00 & 4 & 0.002 \\
WASL &9& 0.08 & 27 & 0.040\\
MET&10& 0.11 & 14 & 0.013 \\
\hline
\end{tabular} 
\end{table}

\noindent\begin{table}
\caption{Metabiome attributes associated with selected genes}
\begin{tabular}{ cc|cc}
\hline
SEED 1 & name & $\tau_{SVD}$ & $q_{SVD}$\\\hline
Carb. & Se.-glyox. cycle & 1 & 0.31\\
Phos. & Control. PHO & 2 & 0.31\\
Viru. & CoZnCd res. & 3 & 0.28\\
Carb. & Beta-Gl. met. & 4 & 0.43\\
Viru. & Res. fluoroq. & 5 & 0.47\\
\hline
\end{tabular} 
\end{table}

\subsection{Discussion}\label{discussion}
The results of our analysis suggest that the gene PPP3CA merits further investigation.
While we could not conclusively determine which of the metabiome attributes were associated with
PPP3CA, we have relatively high confidence that PPP3CA is correlated with the metabiome attributes since 
the gene is highly ranked by multiple methods.
The gene PPP3CA codes for the enzyme calcineurin, which generates a signal activating the gut immune system \cite{VINDEROLA}.  One of calcineurin's specific functions is to dephosphorylate NFAT transcription factors
to promote immune activation \cite{RODRIGUEZ}.

The genes ALOX5 and PAFAH2 were also selected by more than one feature selection method.
In addition, ALOX5 was also selected in a previous study on the combined formula-fed and breast-fed data \cite{SCHWARTZ}. 
The gene ALOX5 codes for arachidonate 5-lipoxygenase, which is involved in mucosal inflammatory responses \cite{CLARK}.

While the results of the metabiome attribute selection were much weaker than the results of the
feature selection for the genes, it is intriguing that two of the top five metabiome attributes
were virulence-related: namely, cobalt-zinc-cadmium resistance and resistance to fluoroquinolones.
Correlations between the immunity and defense-related host genes and the virulence attributes
would agree with the biological intuition that the host would react to pathogens in the instestine;
or that conversely, that pathogenic activity may increase as a result of inhibited host immunodeficiency.

\subsection{Simulation results}\label{simulation}
In Figure 1 we show simulated results for $P_{thres}$ and $P_{SVD}$ for $p=600,q=200$ and $p_t$ varying from 10 to 100, $n$ varying from 2 to 100.  The height of the dark grey bars is the $P_{thres}$ and the height of the light grey bars is $P_{SVD}$
from 0 to 1.  The axis with the rising slope is the axis for $p_t$, taking values $(100, 90, \hdots, 10)$ from left to right.
The axis with falling slope is for $n$ taking values from $(10,20,\hdots,100)$ from left to right.
We used $mc_{res}=40000$ monte carlo trials for each parameter value; thus the standard errors $< 0.025$ result in confidence bounds which are too small to be visible.

\begin{figure}
\caption{Simulated performance of thresholding (dk. grey) versus SVD (lt. grey) for $p=600,\ q=200$}
\noindent\includegraphics[trim=3cm 3cm 10cm 2cm, clip=true, totalheight=2in]{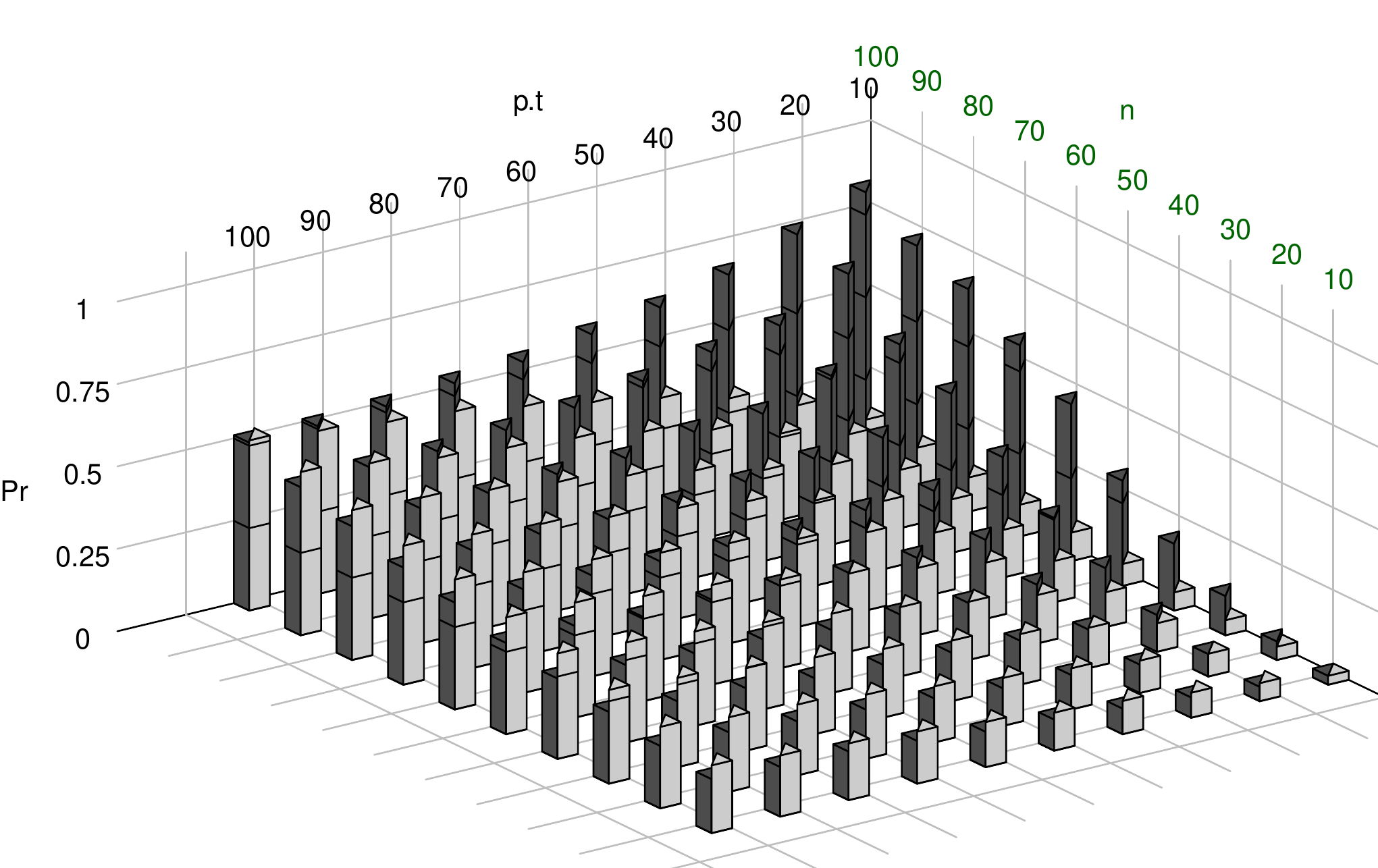}
\end{figure}

From the simulation we conclude that for plausible values of $p_t$, the
top ranked gene via thresholding (or SVD) is a false positive with probability exceeding 0.9.
However, SVD is indeed more effective than thresholding at $n=12$.  But as we can see, as the sample size increases,
the thresholding method rapidly climbs in relative effectiveness.
At $n=70$, the thresholding method has a higher probability of assigning the top ranking
to a correlated gene than the SVD method for $p_t < 50$.
At $n=100$, the thresholding method outperforms the SVD method for $p_t < 90$, which
encompasses most of the biologically plausible range for $p_t$.
Of note is the nonmonotonicity of the thresholding method with respect to $p_t$ for fixed $n,p,q$;
while both SVD and thresholding increase in effectiveness for increasing $p_t$ when $p_t$ is large,
thresholding experiences a dramatic increase in effectiveness for decreasing $p_t$ when $p_t$ is small.
Yet even under the best plausible conditions, with $p_t=10$ for thresholding, a minimum sample size of 100
is required for the top-ranked feature to be correlated to $Y$ even 80 percent of the time.

These significant discepancies in performance, however, would seem to indicate that neither the
thresholding method nor the SVD method can be claimed to be the ``optimal'' method, and that
there may exist an as-of-a yet undiscovered method which dominates both of these simple approaches.

\section{Impact and Significance}\label{conclusions}

Our simulation results succeed in providing a basic understanding of the differences between
the thresholding and SVD methods.
To our knowledge, such a comparative study of multivariate feature selection methods has never
appeared in the literature.
In addition, our model allows for the quantitative analysis of experimental design considerations.
Researchers desiring an understanding of an integrated biological system can use
the model proposed in the paper to determine the relative value of additional observations
versus measurements of additional biological features (depth versus breadth).
This approach provides an appreciation for the importance of having prior knowledge that can allow for elimination
of extraneous features or variates.

With respect to the original problem which motivated this work, our data analysis diagnostics
and simulation results demonstrate that singular value decomposition is an effective tool
for identifying correlations between genes and microbial attributes for small-sample microarray-metagenomics datasets.
Our data analysis of the infant microarray-metagenomics dataset indicate that the combination
of SVD-based feature selection with permutation-null-derived false discovery rates provides
a powerful framework for inferring host-microbiome interactions.

While we only scratch the surface of the multivariate feature selection problem in this paper,
by the same token, the tools we introduce can be employed in further studies on multivariate selection.
The asymptotic approximation for the sample cross-covariance matrix in our model can be used
for any feature selection method to be studied using our model.
We demonstrate how stochastic optimization can be used to evaluate the accuracy of
the asymptotic approximation. In addition, the same stochastic optimization techniques
can be used to compare the performances of two competing feature selection methods.

It would be interesting to compare the performance of group lasso,
sparse CCA and bayesian approaches to feature selection under our proposed model.
In particular, we expect our results on the SVD method to generalize to the performance of
sparse CCA feature selection methods due to the similarity between the algorithms (\S\ref{thressvd}).
Based on our simulation results,
we predict that the thresholding method also outperforms the sparse CCA method as
the sample size increases.

\subsection*{Acknowledgements}
We are indebted to the Texas A \& M Brazos Computing Cluster and Institute of Developmental
and Molecular Biology for access to computing resources, and to professors David B. Dahl, 
Mohsen Pourahmadi, and Joel Zinn for helpful discussions.
The infant microarray-metagenomics data was provided courtesy of Sharon M. Donovan,
of the Division of Nutritional Sciences, U. of Illinois, Urbana, IL.

\end{document}